\documentclass[letterpaper, 10 pt, conference]{ieeecolor}
\usepackage{cite}
\usepackage{amsmath,amssymb,amsfonts}
\usepackage{algorithmic}
\usepackage{graphicx}
\usepackage{algorithm,algorithmic}
\usepackage{hyperref}
\hypersetup{hidelinks=true}
\usepackage{textcomp}
\usepackage{enumerate}
\usepackage{dsfont}
\usepackage{xcolor}
\usepackage{comment}
\usepackage[normalem]{ulem}
\usepackage{accents}
\newcommand{\ubar}[1]{\underaccent{\bar}{#1}}

\newtheorem{lemma}{Lemma}
\newtheorem{remark}{Remark}

\newtheorem{assumption}{Assumption}
\newtheorem{proposition}{Proposition}

\newtheorem{corollary}{Corollary}
\newtheorem{problem}{Problem}
\let\oldtheorem\theorem
\let\endoldtheorem\endtheorem
\def\theorem{\begingroup \oldtheorem }
\def\endtheorem{ \hfill $\square$\endoldtheorem \endgroup}

\let\olddefinition\definition
\let\oldenddefinition\enddefinition
\def\definition{\begingroup \olddefinition  }
\def\enddefinition{ \hfill $\square$\oldenddefinition\endgroup}

\let\oldexample\example
\let\endoldexample\endexample
\def\example{\begingroup \oldexample }
\def\endexample{ \hfill $\square$\endoldexample \endgroup}

\DeclareMathOperator{\col}{col}

\usepackage[prependcaption,colorinlistoftodos]{todonotes}

\title{Stability of Information-Based Routing in Dynamic Transportation Networks}
\author{Shaya Garjani, Ashish Cherukuri, Bayu Jayawardhana, and Nima Monshizadeh
\thanks{The authors are with the Engineering and Technology Institute Groningen and J.C. Willems Center for Systems and Control, Faculty of Science and Engineering, University of Groningen, The Netherlands (email: \href{s.garjani@rug.nl}{s.garjani@rug.nl}, \href{a.k.cherukuri@rug.nl}{a.k.cherukuri@rug.nl}, \href{b.jayawardhana@rug.nl}{b.jayawardhana@rug.nl}, \href{n.monshizadeh@rug.nl}{n.monshizadeh@rug.nl})}}
\IEEEoverridecommandlockouts
\begin{document}

\maketitle

\begin{abstract}
Recent studies on transportation networks have shown that real-time route guidance can inadvertently induce congestion or oscillatory traffic patterns. Nevertheless, such technologies also offer a promising opportunity to manage traffic non-intrusively by shaping the information delivered to users, thereby mitigating congestion and enhancing network stability.
A key step toward this goal is to identify information signals that ensure the existence of an equilibrium with desirable stability and convergence properties. This challenge is particularly relevant when traffic density and routing dynamics evolve concurrently, as increasingly occurs with digital signaling and real-time navigation technologies.
To address this, we analyze a parallel-path transportation network with a single origin-destination pair, incorporating joint traffic density and logit-based routing dynamics that evolve at the same timescale. 
We characterize a class of density-dependent traffic information that guarantees a unique equilibrium in the free-flow regime, ensures its asymptotic stability, and keeps traffic densities within the free-flow region for all time. The theoretical results are complemented by a numerical case study demonstrating how the framework can inform the design of traffic information that reduces total travel time without compromising credibility.

\end{abstract}

\section{Introduction}\label{sec1}
For the past decade, there has been a growing interest in controlling congestion in dynamical transportation networks. This trend is largely due to rapid advances
in route-recommendation technologies that allow drivers to continuously adjust their routes based on real-time traffic information. While these technologies enable individuals to reduce their personal travel times, their widespread use can lead to inefficiencies or even instability in traffic patterns \cite{acemoglu2018informational,toso2023modeling,10520878}. Consequently, analyzing traffic dynamics under adaptive driver behavior has become an essential research problem.

Transportation systems have been extensively studied within the context of routing games \cite{wardrop1952road, beckmann1956studies,roughgarden2002bad}, in which drivers select routes that minimize their travel time. The equilibrium of this game, referred to as the Wardrop equilibrium \cite{wardrop1952road}, corresponds to a condition where no driver can unilaterally reduce their travel time by changing routes.  Although classical routing games provide a useful framework for characterizing the equilibria and assessing their efficiency \cite{roughgarden2002bad}, they neglect the dynamical aspects of the network, offering no insight into how routing patterns converge to these equilibria. To address this, evolutionary dynamics \cite{hofbauer1998evolutionary, sandholm2010population} have been proposed, which describe the evolution of routing decisions. Under standard monotonicity assumptions, routing games admit a concave potential function \cite{beckmann1956studies} and under a perturbed best-response dynamic, such as the logit dynamic, they admit a globally asymptotically stable equilibrium\cite{sandholm2010population}. 

Another dynamical aspect of transportation networks, that has been overlooked in the classical routing game, is the underlying traffic flow dynamics, which, in macroscopic frameworks, are often modeled as mass-conservation laws \cite{siri2021freeway}. The stability of dynamic traffic flow networks has been investigated in \cite{toso2023modeling,lovisari2014stability,nilsson2023strong,toso2023impact}. However, these studies consider static routing preferences that are either fixed \cite{lovisari2014stability,nilsson2023strong} or  instantaneously equilibrated based on the current traffic state \cite{toso2023modeling,toso2023impact}.

In emerging transportation systems, traffic dynamics and routing decisions are becoming increasingly intertwined. The analysis of traffic networks with coupled density–routing dynamics has been addressed in only a handful of studies in the literature \cite{como2013stability, 10520878}. In \cite{como2013stability}, the authors consider a directed acyclic graph with macroscopic density dynamics and a perturbed best-response dynamics governing the evolution of routing preferences. They establish the asymptotic stability of the system to the unique Wardrop equilibrium using a singular perturbation approach. The alignment of the equilibrium with the social optimum was later investigated in \cite{como2021distributed} by introducing decentralized monotone feedback tolls. This work also extended the results of \cite{como2013stability} to cyclic multigraphs.

Most of these works, however, rely on simplifying assumptions such as separated time scales between density and routing evolution or strictly monotone feedback mappings, which limit their applicability in modern settings where routing and traffic states interact simultaneously through 
real-time information platforms. Modern navigation platforms continuously update route recommendations in response to evolving traffic, while the resulting routing choices, in turn, reshape these states at comparable timescales. Consequently, understanding and ensuring the stability of such joint density–routing dynamics asks for concurrent analysis of both processes rather than treating them separately or assuming one evolves much slower than the other. The work in \cite{10520878} proposes a coupled density-routing dynamics where the routing evolve according to the replicator dynamics \cite{hofbauer1998evolutionary}, in which users imitate previous travelers' preferences. While the replicator formulation offers valuable analytical tractability, its behavioral interpretation in the context of real-time traffic networks is less direct as drivers typically do not observe other travelers’ realized travel times or adapt by imitation. 

As mentioned, the information communicated to users plays a crucial role in the collective outcomes, and providing raw travel delay information can amplify fluctuations or even destabilize the network, whereas suitably designed information signals can steer behavior toward more efficient and stable operating conditions. This motivates an information-design perspective, where the goal is to identify families of information functions 
that promote desirable equilibria while preserving information credibility and user responsiveness.

\textit{Contribution:}
In this work, we take a preparatory step toward such an information-design framework by characterizing a flexible class of density-dependent information signals that guarantees the existence of an equilibrium in the free-flow regime and ensures its asymptotic stability. The analysis explicitly captures the coupling between density and routing dynamics evolving at the same timescale. In contrast to most existing stability results that rely on strictly monotone information mappings, a hallmark assumption in earlier studies, our framework relaxes this requirement to provide greater flexibility for designing or optimizing information signals in practice. This theoretical groundwork enables subsequent research to incorporate system-level optimization objectives, e.g., minimizing total travel time under the constraint that information signals belong to the identified admissible class. We illustrate this potential through a numerical case study demonstrating how such signals can reduce congestion without compromising credibility.

The rest of the paper is organized as follows. Section \ref{problem_formulation_sec} introduces the coupled density-routing dynamics and the problem formulation.
Section \ref{condition_sec} specifies a class of traffic information ensuring the existence of a unique asymptotically stable free-flow equilibrium. Section \ref{example} presents the information design framework and numerical results. Section \ref{conclusion_sec} presents the concluding remarks.

\textit{Notation:} We denote the set of real and nonnegative real numbers by $\mathbb{R}$ and $\mathbb{R}_{\geq 0}$, respectively. The standard Euclidean norm is denoted by $\|\cdot\|$. Let $\mathds{1}$ denote the vector of all ones with dimensions deduced from the context. Given vectors $x_1, \ldots, x_n$, we denote $\col(x_j)=[x_1^\top, \ldots, x_n^\top]^\top$. For vectors $x\in\mathbb{R}^n$ and $y\in\mathbb{R}^n$, $x\leq y$ denotes the element-wise inequalities $x_j\leq y_j$, $ j=1,\ldots, n$. For any set $\mathcal{Y}\subseteq \mathbb{R}^n$, $\text{int}(\mathcal{Y})$ denotes its interior.

\section{Problem Statement}\label{problem_formulation_sec}
In this section, the dynamic model of the transportation network is introduced. The traffic density dynamics and routing dynamics are described, and the problem is formulated.
\subsection{Traffic density-flow model}
Consider $p$ parallel paths connecting 
an origin and a destination. 
For each path $j \in \mathcal{P} := \{1, \ldots, p\}$, the macroscopic density dynamics are modeled as a mass conservation law \cite{daganzo1994cell} given by
\begin{equation}\label{density_dyn}
    \dot{x}_j(t)=-f_j(x_j(t))+\lambda r_j(t),
\end{equation}
where $x_j$ denotes the traffic density on path $j$, $\lambda>0$ is the constant incoming flow, $f_j:\mathbb{R}_{\geq 0}\rightarrow \mathbb{R}_{\geq 0}$ denotes the function describing the outflow from path $j$, and $r_j:\mathbb{R}_{\geq 0}\rightarrow[0, 1]$ indicates the portion of drivers opting for path $j$ at any given time $t$, whose dynamics will be specified later. Each path $j\in\mathcal{P}$ has a critical density $\bar{x}_j$ and critical flow $\bar{f}_j:=f_j(\bar{x}_j)$. The region $\mathcal{X}_j:=[0,\bar{x}_j]$ represents the free-flow regime on path $j\in\mathcal{P}$, while densities outside this region correspond to the congested regime. We make the following assumption on the outflow function $f_j$.
\begin{assumption}\label{outflow_ass}
    The map $f_j$ is Lipschitz continuous and strictly monotone on $\mathcal{X}_j$ 
    , i.e.
    \begin{equation*}
         (x_j-x_j')(f_j(x_j)-f_j(x_j'))>0, \quad \forall x_j, x_j' \in \mathcal{X}_j,
    \end{equation*}
    and $f_j(0)=0$.
\end{assumption}

    A few  notable instances of outflow functions $f_j$ satisfying Assumption \ref{outflow_ass} 
     are provided below: 
    \begin{description}
\item[(i).]  $f_j(x_j) = (2\bar{f}_j/\bar{x}_j)x_j(1-(x_j/2\bar{x}_j))$ known as Greenshield's fundamental diagram  \cite{greenshields1935study}; 
\item[(ii).] $f_j(x_j)=\min{\{(\bar{f}_j/\bar{x}_j)x_j,\ \bar{f}_j+w_j(\bar{x}_j-x_j)\}}$ with $w_j>0$, known as Daganzo's triangular fundamental relation \cite{daganzo1995cell}; and
    \item[(iii).] $f_j=\bar{f}_j(1-e^{-\theta_j x_j}),\, \theta_j>0$ \cite{como2021distributed}.
    \end{description}

\subsection{Information-based traffic dynamics}

We assume that for all $ j\in\mathcal{P}$, the routing ratios $r_j$ in \eqref{density_dyn} evolve according to the following logit dynamics
\begin{equation*}
    \dot{r}_j(t) = -r_j(t) + \frac{e^{-\eta u_j(x_j(t))}}{\sum_{i=1}^p{e^{-\eta u_i(x_i(t))}}},
\end{equation*}
where $u_j(t)\in \mathbb{R}_{\geq 0}$ is an input signal representing communicated traffic information about path $j$ and is modeled as a continuous function of density, i.e., $u_j(t) \equiv u_j(x_j(t))$. 

The logit dynamics, also known as the perturbed best-response \cite{sandholm2010population}, captures the adaptive behavior of a population of drivers who tend to select routes with lower perceived travel times, while accounting for limited responsiveness to the provided information. The parameter $\eta\geq0$ reflects the population’s responsiveness; with higher values leading to route choices more concentrated on paths with lower perceived travel times, and lower values resulting in more dispersed choices. In the extreme case $\eta \to 0$, drivers ignore the information entirely and choose routes uniformly, whereas for $\eta \to \infty$, the logit dynamics converge to the classical best-response dynamics where all drivers choose the routes with minimal perceived travel times.

The resulting joint density-routing dynamics are expressed as 
\begin{subequations}\label{vec_dyn_eq}
    \begin{align}
        &\dot{x}(t) = -f(x(t)) + \lambda r(t),\label{vec_dyn_dens} \\ 
        &\dot{r}(t) = -r(t) + \frac{e^{-\eta u(x)}}{\mathds{1}^{\top} e^{-\eta u(x)}},\label{vec_dyn_ratio}
    \end{align}
\end{subequations}

where $x:=\col(x_j)$, $f:=\col(f_j)$, and $r:=\col(r_j)$.

Our goal here is to investigate how the routing decisions can be influenced in real-time by {traffic information} so that the system remains in the free-flow regime. Specifically, we determine a class of feedback information {signals} such that the above dynamics admit an equilibrium in the free-flow region $\mathcal{X}:=\mathcal{X}_1\times\ldots\times\mathcal{X}_p$, and the  traffic densities and routing ratios asymptotically converge to it.
This goal is formalized below:

\begin{problem}\label{problem}
Specify a class of feedback information \mbox{signals} $u$, denoted by $\mathcal{U}$, such that for any $u\in\mathcal{U}$, the system \eqref{vec_dyn_eq} admits a (unique)  asymptotically stable
equilibrium $(x^\mathrm{eq}, r^\mathrm{eq})\in \mathrm{int}(\mathcal{X})\times \mathcal{R}$, where $\mathcal{R}:=\{r\in \mathbb{R}_{\geq 0}^p\ |\ \mathds{1}^\top r =1\}.$
\end{problem}

\section{Equilibrium and Stability Analysis}\label{condition_sec}
\subsection{Existence of equilibrium}

In this subsection, we establish some sufficient conditions under which \eqref{vec_dyn_eq} admits an equilibrium in $\mathcal{X}\times \mathcal{R}$.

The set of equilibria of \eqref{vec_dyn_eq} is given by the points $(x^{\mathrm{eq}}, r^{\mathrm{eq}})$ satisfying
\begin{equation}\label{eq_eq}
    0 = -f(x^{\mathrm{eq}})+r^{\mathrm{eq}}\lambda, \quad r^{\mathrm{eq}}=\frac{e^{-\eta u(x^{\mathrm{eq}})}}{\mathds{1}^\top e^{-\eta u(x^{\mathrm{eq}})}}.
\end{equation}
Hence, \eqref{vec_dyn_eq} admits an equilibrium in the free-flow region $\mathcal{X}$, if and only if 
\begin{equation}\label{nec_suff_exist_eq}
    f(x) = \lambda \frac{e^{-\eta u(x)}}{\mathds{1}^\top e^{-\eta u(x)}}
\end{equation}
has a solution $x^{\mathrm{eq}}\in\mathcal{X}$.

Since $u_j$ is continuous, it admits a lower and upper bound in $\mathcal{X}_j$, namely $u_j\in [\ubar{u}_j, \bar{u}_j]$. We can leverage the boundedness of $u_j$ to establish the following result.

\begin{proposition}\label{prop_existence_suff}
    Let Assumption \ref{outflow_ass} hold. Then, \eqref{vec_dyn_eq} admits an equilibrium in $\mathcal{X}\times \mathcal{R}$ 
    if   
    \begin{equation}\label{exist_cond}
    \lambda \leq \bar{f}_je^{\eta \ubar{u}_j}\sum_{i\in \mathcal{P}}{e^{-\eta\bar{u}_i}}, \quad \forall j\in \mathcal{P}.
    \end{equation}
\end{proposition}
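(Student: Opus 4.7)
The plan is to reformulate the existence of an equilibrium as a fixed-point problem in the free-flow region and apply Brouwer's theorem. Note that once a density $x^{\mathrm{eq}}\in\mathcal{X}$ satisfying \eqref{nec_suff_exist_eq} is found, setting $r^{\mathrm{eq}} = e^{-\eta u(x^{\mathrm{eq}})}/(\mathds{1}^\top e^{-\eta u(x^{\mathrm{eq}})})$ automatically lies in $\mathcal{R}$, so the entire argument reduces to finding such $x^{\mathrm{eq}}$.

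First I would use Assumption \ref{outflow_ass} to observe that, since $f_j$ is continuous, strictly monotone on the interval $\mathcal{X}_j=[0,\bar{x}_j]$, and satisfies $f_j(0)=0$ and $f_j(\bar{x}_j)=\bar{f}_j$, the map $f_j:\mathcal{X}_j\to[0,\bar{f}_j]$ is a homeomorphism, so $f_j^{-1}:[0,\bar{f}_j]\to\mathcal{X}_j$ is well-defined and continuous. This lets me recast \eqref{nec_suff_exist_eq} componentwise as the fixed-point equation $x_j = T_j(x)$, where
\begin{equation*}
T_j(x):= f_j^{-1}\!\left(\lambda\,\frac{e^{-\eta u_j(x_j)}}{\sum_{i\in\mathcal{P}} e^{-\eta u_i(x_i)}}\right),
\end{equation*}
provided the argument of $f_j^{-1}$ lies in $[0,\bar{f}_j]$ for every $x\in\mathcal{X}$.

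The core step is verifying this range condition, which is precisely where \eqref{exist_cond} is used. For any $x\in\mathcal{X}$, the continuity bounds $u_i(x_i)\in[\ubar{u}_i,\bar{u}_i]$ give
\begin{equation*}
\lambda\,\frac{e^{-\eta u_j(x_j)}}{\sum_{i\in\mathcal{P}} e^{-\eta u_i(x_i)}} \;\leq\; \lambda\,\frac{e^{-\eta \ubar{u}_j}}{\sum_{i\in\mathcal{P}} e^{-\eta \bar{u}_i}} \;\leq\; \bar{f}_j,
\end{equation*}
where the last inequality is exactly \eqref{exist_cond} rearranged. Nonnegativity of the argument is immediate. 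Hence $T_j(x)\in\mathcal{X}_j$ for every $j\in\mathcal{P}$, so $T$ is a continuous self-map of the compact, convex set $\mathcal{X}=\mathcal{X}_1\times\cdots\times\mathcal{X}_p$.

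Finally, I would invoke Brouwer's fixed-point theorem to obtain $x^{\mathrm{eq}}\in\mathcal{X}$ with $T(x^{\mathrm{eq}})=x^{\mathrm{eq}}$; applying $f$ to both sides recovers \eqref{nec_suff_exist_eq}, and pairing with the associated logit probability vector yields the desired equilibrium. The only genuinely delicate point is the range computation above: it must hold simultaneously for all $j$, which is why \eqref{exist_cond} is imposed componentwise. Everything else (continuity of $T$, compactness/convexity of $\mathcal{X}$, and membership of $r^{\mathrm{eq}}$ in the simplex $\mathcal{R}$) is structural and follows directly from the construction.
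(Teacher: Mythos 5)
Your proposal is correct and follows essentially the same route as the paper: bound the scaled softmax term by $\bar f_j$ using $u_i\in[\ubar{u}_i,\bar u_i]$ together with \eqref{exist_cond}, invert $f$ to obtain a continuous self-map of $\mathcal{X}$, and apply Brouwer's fixed-point theorem. Your added remarks (that $f_j^{-1}$ is well-defined and continuous by strict monotonicity and continuity on the compact interval, and that $r^{\mathrm{eq}}$ then automatically lies in $\mathcal{R}$) are details the paper leaves implicit but do not change the argument.
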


\begin{proof}
    By $u_j\in [\ubar{u}_j, \bar{u}_j]$, we have that
    
\begin{equation}\label{bounds_exists}
        0 \leq \lambda \frac{e^{-\eta u(x)}}{\mathds{1}^\top e^{-\eta u(x)}}\leq \lambda\frac{e^{-\eta \ubar{u}}}{\mathds{1}^\top e^{-\eta \bar{u}}}, \quad \forall x\in\mathcal{X}.
    \end{equation}
    Denote $\bar f:=\col(\bar f_j)$, $\bar{u}:=\col(\bar u_j)$, and $\ubar{u}=\col(\ubar u_j)$. By \eqref{exist_cond}, it follows that 
    $
\lambda{e^{-\eta \ubar{u}}}\leq ({\mathds{1}^\top e^{-\eta \bar{u}}})\bar f$. Combining the latter inequality with \eqref{bounds_exists}, we find that
    \[
    0 \leq \lambda \frac{e^{-\eta u(x)}}{\mathds{1}^\top e^{-\eta u(x)}}\leq 
    \bar f
    , \quad \forall x\in\mathcal{X}.
    \]
By strict monotonicity of $f_j$ (Assumption~\ref{outflow_ass}), we obtain 
    \begin{equation*}
        0\leq \phi(x) \leq \bar{x}, \quad \phi(x):= f^{-1}\left(\lambda\frac{e^{-\eta u(x)}}{\mathds{1}^\top e^{-\eta u(x)}}\right).
    \end{equation*}
Noting that $\phi: \mathcal{X} \rightarrow \mathcal{X}$ is a continious mapping into itself, we can apply Brouwer’s fixed-point theorem \cite[Theorem 9.2]{pata2019fixed}   to conclude that there exists a point $x^\mathrm{eq}$ such that $\phi(x^\mathrm{eq})=x^\mathrm{eq}$.
Consequently, $x^\mathrm{eq}$ satisfies \eqref{nec_suff_exist_eq}, which completes the proof.
\end{proof}

\begin{remark}
    From \eqref{nec_suff_exist_eq} and $u_j\in [\ubar{u}_j, \bar{u}_j]$, it can be shown that a necessary condition for the existence of an equilibrium in the free-flow region is $\lambda \leq \bar{f}_je^{\eta \bar{u}_j}\sum_{i\in \mathcal{P}}{e^{-\eta\ubar{u}_i}}, \,\forall j\in \mathcal{P}$. Note that, this condition is less restrictive than \eqref{exist_cond}. As the term $\eta (\bar{u}_i-\ubar{u}_j)$ for $i,j\in\mathcal{P}$ decreases, the gap between this necessary condition and the sufficient condition \eqref{exist_cond} shrinks.
\end{remark}
\subsection{Uniqueness of equilibrium}

In this subsection, we discuss conditions for guaranteeing 
uniqueness of the equilibrium.
To this end, we first impose a slightly stronger assumption than Assumption \ref{outflow_ass}, where the strict monotonicity property of outflow functions is replaced by strong monotonicity.
\begin{assumption}\label{f_strong_mono}
     Each outflow function $f_j$ is Lipschitz continuous, and $\mu_j$-strongly monotone in $\mathcal{X}_j$, i.e., for each $j\in \mathcal{P}$, there exists $ \mu_j>0$ such that
    \begin{equation*}
         (x_j-x_j')(f_j(x_j)-f_j(x_j'))\geq \mu_j(x_j-x_j')^2, \quad \forall x_j, x_j' \in \mathcal{X}_j,
    \end{equation*}
    and $f_j(0)=0$.
\end{assumption}

Furthermore, we impose the following regularity condition on the traffic information map  $u_j$.
\begin{assumption}\label{lipschitz_u}
    For each $j\in \mathcal{P}$, the map $u_j$ is continuously differentiable on $\mathcal{X}_j$, i.e., for all $j\in\mathcal{P}$, there exists $\ell_j\geq 0$ such that
    \[
    \left|\frac{\partial u_j}{\partial x_j}(x_j)\right| \leq \ell_j, \quad \forall x_j\in\mathcal{X}_j. 
    \]
\end{assumption}

We require the following technical lemma before stating the main result of this subsection. 

\begin{lemma}\label{softmax_lipschitz}
    The softmax function
    \begin{equation}\label{softmax_func}
        \sigma(z) :=\frac{e^{-\eta z}}{\mathds{1}^\top e^{-\eta z}}
    \end{equation}
    is globally $\frac{\eta}{2}$-Lipschitz continuous.
\end{lemma}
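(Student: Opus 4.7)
My plan is to bound the spectral norm of the Jacobian of $\sigma$ uniformly on $\mathbb{R}^p$ and then obtain Lipschitzness via the mean value inequality. A direct computation shows that the Jacobian of the map in \eqref{softmax_func} has the well-known form
\begin{equation*}
J_\sigma(z)=-\eta\bigl(\mathrm{diag}(\sigma(z))-\sigma(z)\sigma(z)^\top\bigr),
\end{equation*}
so the task reduces to proving that for every probability vector $p\in\mathcal{R}$ one has $\|\mathrm{diag}(p)-pp^\top\|_2\le\tfrac{1}{2}$. Once this is established, for any $z,z'\in\mathbb{R}^p$,
\begin{equation*}
\sigma(z)-\sigma(z')=\int_{0}^{1}J_\sigma\bigl(z'+t(z-z')\bigr)\,(z-z')\,dt,
\end{equation*}
which, combined with the norm bound, yields $\|\sigma(z)-\sigma(z')\|\le\tfrac{\eta}{2}\|z-z'\|$.

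The central step, and the only nontrivial one, is the spectral bound on $M(p):=\mathrm{diag}(p)-pp^\top$. I would argue in two stages. First, for any $v\in\mathbb{R}^p$ with $\|v\|=1$, I recognize the quadratic form as a variance, namely $v^\top M(p) v=\sum_i p_i v_i^2-(\sum_i p_i v_i)^2=\mathrm{Var}_p(v)$. By Popoviciu's inequality for bounded random variables, this variance is at most $(\max_i v_i - \min_i v_i)^2/4$. Second, I would show that $\max_i v_i - \min_i v_i\le\sqrt{2}$ whenever $\|v\|=1$: if the maximum $a=\max_i v_i$ and the negated minimum $b=-\min_i v_i$ have opposite signs, then $a^2+b^2\le\|v\|^2=1$ and so $(a+b)\le\sqrt{2(a^2+b^2)}\le\sqrt{2}$; otherwise the range is bounded by $\|v\|_\infty\le 1\le\sqrt{2}$. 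Combining the two estimates gives $v^\top M(p) v\le 1/2$, and since $M(p)$ is symmetric positive semidefinite (being a covariance matrix), $\|M(p)\|_2\le 1/2$.

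With this bound in hand, $\|J_\sigma(z)\|_2\le \eta/2$ uniformly in $z$, and the integral representation above closes the argument. The main obstacle I anticipate is justifying the sharper spectral bound $\tfrac12$ rather than the looser bound $1$ that one would get from $\|M(p)\|_2\le\mathrm{tr}(M(p))=1-\|p\|^2$; the combination of Popoviciu's inequality with the $\sqrt{2}$ range bound for unit vectors is what provides the tight factor needed to match the claimed Lipschitz constant $\eta/2$.
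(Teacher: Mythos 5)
Your proposal is correct in substance but reaches the key estimate by a genuinely different route than the paper. Both arguments start from the same Jacobian $J_\sigma(z)=-\eta\bigl(\mathrm{diag}(\sigma(z))-\sigma(z)\sigma(z)^\top\bigr)$ and reduce the claim to the spectral bound $\|\mathrm{diag}(p)-pp^\top\|\le\tfrac12$ for probability vectors $p$. The paper obtains this from the Gershgorin circle theorem, using the explicit entries of $J_\sigma$: the absolute row sums equal $2\eta\sigma_j(z)(1-\sigma_j(z))\le\eta/2$. You instead interpret the quadratic form $v^\top(\mathrm{diag}(p)-pp^\top)v$ as a variance and combine Popoviciu's inequality with the range bound $\max_i v_i-\min_i v_i\le\sqrt2$ for unit vectors $v$. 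Your route is somewhat more general (it bounds the spectral norm of any multinomial covariance matrix, without reference to the softmax structure), while the paper's is more elementary (pure row-sum bookkeeping); both yield the same tight constant $\eta/2$, attained as $\sigma(z)$ approaches $(\tfrac12,\tfrac12)$ in the two-path case, and both close the argument with the same mean-value/integral step.

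One slip worth fixing: your case split for the range bound is stated backwards. With $a=\max_i v_i$ and $b=-\min_i v_i$, the case that needs the estimate $a+b\le\sqrt{2(a^2+b^2)}\le\sqrt2$ is $a\ge 0$ and $b\ge 0$ (maximum and minimum of opposite signs), since then the two extreme entries occur at distinct indices and $a^2+b^2\le\|v\|^2=1$; the case in which the range is at most $\|v\|_\infty\le 1$ is the one where all entries share a sign. As written, your ``otherwise'' branch includes $\max_i v_i\ge 0\ge\min_i v_i$ and would claim the range is at most $\|v\|_\infty$, which fails, e.g., for $v=(0.8,-0.6)$ whose range is $1.4>\|v\|_\infty=0.8$. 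The repair is immediate, and in fact no case split is needed: whenever the maximum and minimum occur at distinct indices, $(a+b)^2\le 2(a^2+b^2)\le 2\|v\|^2$, and the remaining case (all entries equal) has zero range.
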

\begin{proof}
     The Jacobian of $\sigma(z)$ is
     given by
    \begin{equation*}
        J(z) = -\eta (\text{diag}(\sigma(z)) - \sigma(z)\sigma(z)^\top).
    \end{equation*}
    The Lipschitz constant of $\sigma(z)$ can be characterized as
    \begin{equation*}
        \ell_\sigma := \sup_{z\in\mathbb{R}^p}{\|J(z)\|} = \sup_{z\in\mathbb{R}^p}{\max_{j\in \mathcal{P}}{\rho_j(J(z))}},
    \end{equation*}
    where $\rho_j(J(z))$ denotes the $j$th eigenvalue of $J(z)$. Using the Gershgorin Circle Theorem \cite{varga2011gervsgorin}, we can derive the following upper bound for $\rho_j(J(z))$, $\forall j\in \mathcal{P}$:
    \begin{equation*}
        \begin{aligned}
            \rho_j(J(z)) &\leq \sum_{j\in \mathcal{P}}{|J_{ij}(z)|}\\
            &= \eta\sigma_j(z)\bigg[(1-\sigma_j(z)) + \sum_{i\neq j}{\sigma_i(z)}\bigg]\\
            &= 2\eta\sigma_j(z)(1-\sigma_j(z))\leq \frac{\eta}{2},
        \end{aligned}
        \end{equation*}
        where the second equality follows from 
        $\sum_{i\in \mathcal{P}}{\sigma_i(z)}=1$ and the last inequality holds since $a(1-a)\leq \frac{1}{4}$ for any $a\in \mathbb{R}$. This completes the proof.
\end{proof}
\begin{remark}
    Lemma \ref{softmax_lipschitz} establishes a smaller Lipschitz constant compared to earlier works, e.g., \cite{gao2017properties}, which verifies that the softmax function is $\eta$-Lipschitz.
\end{remark}
\begin{proposition}\label{uniqueness}
Let Assumptions \ref{f_strong_mono} and \ref{lipschitz_u} hold.
Then \eqref{vec_dyn_eq} has at most one equilibrium in 
$\mathcal{X}\times \mathcal{R}$ provided that
\begin{equation}\label{uniqueness_eq}
    \ell_M < \frac{2\mu_m}{\lambda\eta},
\end{equation}
where $\ell_M:=\max_{j\in \mathcal{P}}{\ell_j}$ and $\mu_m := \min_{j\in \mathcal{P}}{\mu_j}$.
\end{proposition}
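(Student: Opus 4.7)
The plan is to reduce the question to uniqueness of the density component of the equilibrium and then exploit the contradiction between strong monotonicity of $f$ and the (now explicit) Lipschitz bounds on $\sigma\circ u$.

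First I would observe from \eqref{eq_eq} that at any equilibrium the routing ratio is pinned down by the density via $r^{\mathrm{eq}}=\sigma(u(x^{\mathrm{eq}}))$, with $\sigma$ as in \eqref{softmax_func}. Hence it suffices to show that the fixed-point equation \eqref{nec_suff_exist_eq}, namely
\[
f(x) \;=\; \lambda\,\sigma(u(x)),
\]
has at most one solution in $\mathcal{X}$. Suppose, for contradiction, there exist two such solutions $x^1, x^2\in\mathcal{X}$. Subtracting the two equilibrium equations and taking the inner product with $x^1-x^2$ gives
\[
(x^1-x^2)^{\top}\!\bigl(f(x^1)-f(x^2)\bigr) \;=\; \lambda\,(x^1-x^2)^{\top}\!\bigl(\sigma(u(x^1))-\sigma(u(x^2))\bigr).
\]

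Next I would lower-bound the left-hand side and upper-bound the right-hand side. By Assumption~\ref{f_strong_mono}, summing the componentwise strong-monotonicity estimates yields
\[
(x^1-x^2)^{\top}\bigl(f(x^1)-f(x^2)\bigr) \;\geq\; \mu_m\,\|x^1-x^2\|^2.
\]
For the right-hand side, Cauchy--Schwarz together with Lemma~\ref{softmax_lipschitz} gives
\[
(x^1-x^2)^{\top}\bigl(\sigma(u(x^1))-\sigma(u(x^2))\bigr) \;\leq\; \tfrac{\eta}{2}\,\|x^1-x^2\|\,\|u(x^1)-u(x^2)\|.
\]
Because $u=\col(u_j)$ is a diagonal (componentwise) map with $|\partial u_j/\partial x_j|\leq \ell_j$ by Assumption~\ref{lipschitz_u}, the mean-value inequality componentwise yields $\|u(x^1)-u(x^2)\|\leq \ell_M\,\|x^1-x^2\|$. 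Combining the two bounds produces
\[
\mu_m\,\|x^1-x^2\|^2 \;\leq\; \tfrac{\lambda\eta\,\ell_M}{2}\,\|x^1-x^2\|^2,
\]
which under the hypothesis \eqref{uniqueness_eq} forces $x^1=x^2$, and then $r^1=r^2$ from the equilibrium relation above.

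I do not expect a serious obstacle: the argument is essentially a monotonicity-versus-contraction comparison in which Lemma~\ref{softmax_lipschitz} is the crucial ingredient, since the sharper constant $\eta/2$ (rather than $\eta$) is what makes \eqref{uniqueness_eq} match the statement. The only subtlety worth stating explicitly is the decoupled nature of $u$, which allows the bound $\|u(x^1)-u(x^2)\|\leq \ell_M\|x^1-x^2\|$ to involve the Euclidean norm of the \emph{full} density vector rather than any sum over paths; had $u_j$ depended on all components of $x$, one would need a matrix-norm bound on the Jacobian of $u$ instead.
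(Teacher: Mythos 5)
Your proof is correct and follows essentially the same route as the paper: the paper phrases the argument as strict monotonicity of $g(x):=f(x)-\lambda\sigma(u(x))$ on $\mathcal{X}$, while you run the identical estimate (strong monotonicity of $f$ versus the $\tfrac{\eta}{2}\ell_M$-Lipschitz bound on $\sigma\circ u$ from Lemma~\ref{softmax_lipschitz} and Assumption~\ref{lipschitz_u}) as a direct contradiction between two putative solutions, which is just a reformulation of the same inequality. No gaps; the reduction of $r^{\mathrm{eq}}$ to $x^{\mathrm{eq}}$ and the use of the sharper constant $\eta/2$ match the paper's reasoning.
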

\begin{proof}
    Denote $\sigma_u(x):=\sigma(u(x))$, where $\sigma$ is the softmax function defined in \eqref{softmax_func}.  From \eqref{nec_suff_exist_eq}, observe that any equilibrium of \eqref{vec_dyn_eq}
    satisfies
    \begin{equation*}\label{g_def}
        g(x):= f(x) - \lambda \sigma_u(x) = 0,
    \end{equation*}
    for some $x=x^{\mathrm{eq}}\in\mathcal{X}$. This equation has a unique solution if $g$ is strictly monotone over $\mathcal{X}$. That is, if 
    \begin{equation}\label{unique_monotone_cond}
        (x-x')^\top(g(x)-g(x')) > 0, \quad \forall x,x'\in\mathcal{X}.
    \end{equation}
    From Assumption \ref{f_strong_mono}, it follows that
    \begin{equation}\label{f_monotone}
        (x-x')^\top(f(x)-f(x')) \geq \mu_m\|x-x'\|^2, \quad \forall x,x'\in\mathcal{X}.
    \end{equation}
    Moreover, we have
\begin{equation}\label{sigma_lipschitz}
        \|\sigma_u(x) - \sigma_u(x')\|\leq \frac{\eta}{2}\|u(x) - u(x')\|\leq \frac{\eta}{2} \ell_M\|x-x'\|,
    \end{equation}
    where the inequalities follow from Lemma \ref{softmax_lipschitz} and Assumption \ref{lipschitz_u}, respectively.
    Combining \eqref{f_monotone} and \eqref{sigma_lipschitz}, we find that
    \begin{equation*}\label{g_monotone}
        \begin{aligned}
        &(x-x')^\top (g(x)-g(x'))\\&
            =(x-x')^\top (f(x)-f(x'))- \lambda (x-x')^\top (\sigma_u(x)-\sigma_u(x'))\\
            &\geq \mu_m\|x-x'\|^2 -\lambda \|x-x'\|\|\sigma_u(x)-\sigma_u(x')\|\\
            &\geq \mu_m\|x-x'\|^2 -\frac{\lambda\eta}{2} \ell_M\|x-x'\|^2.
        \end{aligned}
    \end{equation*}
    Consequently, \eqref{unique_monotone_cond} is satisfied if \eqref{uniqueness_eq} holds.
\end{proof}
\subsection{Asymptotic stability}
We now establish conditions under which 
the equilibrium
of the closed-loop system \eqref{vec_dyn_eq} is asymptotically stable.

\begin{proposition}\label{asymp_stability}
    Suppose Assumptions \ref{f_strong_mono} and \ref{lipschitz_u} hold. 
    Let $(x^\mathrm{eq},r^\mathrm{eq})\in\mathrm{int}(\mathcal{X})\times\mathcal{R}$ be an equilibrium of \eqref{vec_dyn_eq}.
    This equilibrium is unique and
    asymptotically stable if
\begin{equation}\label{stability_cond}
        \ell_M < \frac{2\mu_m}{\lambda\eta},
    \end{equation}
    where $\ell_M:=\max_{j\in \mathcal{P}}{\ell_j}$ and $\mu_m:=\min_{j\in \mathcal{P}}{\mu_j}$. 
\end{proposition}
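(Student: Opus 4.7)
The plan is to establish asymptotic stability via a weighted quadratic Lyapunov function, while obtaining uniqueness for free from Proposition \ref{uniqueness}, whose hypothesis \eqref{uniqueness_eq} coincides with \eqref{stability_cond}. The candidate I would use is $V(x,r)=\tfrac{1}{2}\|x-x^{\mathrm{eq}}\|^2+\tfrac{\alpha}{2}\|r-r^{\mathrm{eq}}\|^2$ with a weight $\alpha>0$ to be tuned, which is continuously differentiable, positive definite, and has its unique minimum at the equilibrium.

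Next, I would introduce the error coordinates $\tilde{x}:=x-x^{\mathrm{eq}}$ and $\tilde{r}:=r-r^{\mathrm{eq}}$ and use the equilibrium relations \eqref{eq_eq} to rewrite the dynamics as $\dot{\tilde{x}}=-(f(x)-f(x^{\mathrm{eq}}))+\lambda\tilde{r}$ and $\dot{\tilde{r}}=-\tilde{r}+(\sigma_u(x)-\sigma_u(x^{\mathrm{eq}}))$, where $\sigma_u$ is as in the proof of Proposition \ref{uniqueness}. Computing $\dot{V}$ along the trajectories and then invoking strong monotonicity of $f$ (Assumption \ref{f_strong_mono}) on the $-\tilde{x}^\top(f(x)-f(x^{\mathrm{eq}}))$ term, together with the composite Lipschitz bound \eqref{sigma_lipschitz} and Cauchy--Schwarz on the cross terms $\lambda\tilde{x}^\top\tilde{r}$ and $\alpha\tilde{r}^\top(\sigma_u(x)-\sigma_u(x^{\mathrm{eq}}))$, yields the estimate $\dot{V}\leq -\mu_m\|\tilde{x}\|^2-\alpha\|\tilde{r}\|^2+\bigl(\lambda+\tfrac{\alpha\eta\ell_M}{2}\bigr)\|\tilde{x}\|\|\tilde{r}\|$, which is a quadratic form in the nonnegative scalars $(\|\tilde{x}\|,\|\tilde{r}\|)$.

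The main obstacle is the final step: choosing $\alpha>0$ so that this quadratic form is negative definite. Negative definiteness is equivalent to $4\mu_m\alpha>(\lambda+\alpha\eta\ell_M/2)^2$, which is itself a quadratic inequality in $\alpha$; a discriminant computation shows it is feasible precisely when $\ell_M<2\mu_m/(\lambda\eta)$, i.e., exactly under hypothesis \eqref{stability_cond}. Picking any admissible $\alpha$ yields $\dot{V}\leq -c(\|\tilde{x}\|^2+\|\tilde{r}\|^2)$ for some $c>0$, and Lyapunov's direct method then delivers asymptotic stability. A concluding remark I would add is that forward invariance of $\mathcal{R}$ is automatic, since $\mathds{1}^\top\dot{r}=1-\mathds{1}^\top r$ and $\dot{r}_j\geq 0$ whenever $r_j=0$; moreover, since $x^{\mathrm{eq}}\in\mathrm{int}(\mathcal{X})$, I can restrict attention to a sublevel set of $V$ contained in $\mathrm{int}(\mathcal{X})\times\mathcal{R}$, so that the bounds derived from Assumptions \ref{f_strong_mono} and \ref{lipschitz_u} remain valid along trajectories of interest.
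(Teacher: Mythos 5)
Your proposal is correct and essentially mirrors the paper's own proof: the same weighted quadratic Lyapunov function (with the weight placed on the $r$-error rather than the $x$-error, an equivalent rescaling), uniqueness imported from Proposition~\ref{uniqueness}, strong monotonicity combined with the $\tfrac{\eta}{2}$ softmax Lipschitz bound to control the cross terms, and optimization over the weight recovering exactly \eqref{stability_cond}. The only difference is presentational: you bound the cross term directly via \eqref{sigma_lipschitz} and Cauchy--Schwarz and fix the weight through a scalar discriminant computation, whereas the paper routes the same estimate through the vector-valued mean value theorem and a Schur-complement matrix inequality.
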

\begin{proof}
     By Proposition \ref{uniqueness}, the equilibrium of the system is unique under the condition \eqref{stability_cond}. Now, consider the quadratic Lyapunov candidate
    \begin{equation*}
        V(x,r)=\frac{\alpha}{2}\|x-x^{\mathrm{eq}}\|^2 +\frac{1}{2}\|r-r^{\mathrm{eq}}\|^2, \quad \alpha>0.
    \end{equation*}
    Denote $\sigma_u(x):=\sigma(u(x))$, where $\sigma$ is the softmax function defined in \eqref{softmax_func}. By taking the time derivative of $V$ along the solutions $(x, r)\in \mathcal{X}\times\mathcal{R}$, we obtain
    \begin{equation*}
    \begin{aligned}
        \dot{V}&(x,r) = -\alpha(x-x^{\mathrm{eq}})^\top (f(x)-f(x^{\mathrm{eq}})) \\&+ \alpha\lambda(x-x^{\mathrm{eq}})^\top(r-r^{\mathrm{eq}}) - \|r-r^{\mathrm{eq}}\|^2 \\&+(r-r^{\mathrm{eq}})^\top(\sigma_u(x) - \sigma_u(x^{\mathrm{eq}}))\\
        &\leq -\alpha(x-x^{\mathrm{eq}})^\top M(x-x^{\mathrm{eq}}) + \alpha\lambda(x-x^{\mathrm{eq}})^\top(r-r^{\mathrm{eq}}) \\&+ (r-r^{\mathrm{eq}})^\top \Phi(x)(x-x^{\mathrm{eq}}) - \|r-r^{\mathrm{eq}}\|^2,
    \end{aligned}
    \end{equation*}
    where $M:=\mathrm{diag}(\mu_j)$ and 
    \begin{equation*}
        \Phi(x) := \int_0^1{\frac{\partial \sigma_u}{\partial x}(x^{\mathrm{eq}}+t(x-x^{\mathrm{eq}}))dt}.
    \end{equation*}
   Note that we have used the 
   vector-valued mean theorem to write the inequality.
   Therefore, $\dot{V}(x,r)<0$ if
    \begin{equation*}
        \begin{bmatrix}
            -2\alpha M&\alpha\lambda I + \Phi^\top(x)\\
            \alpha\lambda I + \Phi(x) & -2I
        \end{bmatrix}\prec 0.
    \end{equation*}
 By Schur complement, the above holds if and only if
    \begin{equation*}
        2\alpha M - \frac{1}{2}(\alpha I + \Phi(x))^\top(\alpha I + \Phi(x))\succ 0,
    \end{equation*}
    which is satisfied if
    \begin{equation}\label{stability_intermediate_cond}
        2\sqrt{\alpha \mu_m} > \|\alpha\lambda I + \Phi(x)\|.
    \end{equation}
    
    Define $z_t:=x^{\mathrm{eq}}+t(x-x^{\mathrm{eq}})$. We have
    \begin{equation*}
    \begin{aligned}
        \|\alpha\lambda I + \Phi(x)\| &\leq \alpha\lambda + \|\Phi(x)\|\\
        &\leq \alpha\lambda + \int_0^1{\bigg\|\frac{\partial \sigma_u}{\partial x}(z_t)\bigg\| dt}\\
        &\leq \alpha\lambda + \int_0^1{\bigg\|\frac{\partial \sigma}{\partial x}(u(z_t))\frac{\partial u}{\partial x}(z_t)\bigg\| dt}\\
        &\leq \alpha\lambda + \int_0^1{\bigg\|\frac{\partial \sigma}{\partial x}(u(z_t))\bigg\|\bigg\|\frac{\partial u}{\partial x}(z_t)\bigg\| dt}\\
        &\leq \alpha\lambda + \frac{\eta \ell_M}{2},
    \end{aligned}
    \end{equation*}
    where we used Assumption \ref{lipschitz_u} and Lemma \ref{softmax_lipschitz} to write the last inequality.
    Therefore, \eqref{stability_intermediate_cond} holds if 
\begin{equation*}
        2\sqrt{\alpha\mu_m} - \alpha\lambda > \frac{\eta \ell_M}{2}.
    \end{equation*}

The left-hand side is maximized by choosing $\alpha=\mu_m/\lambda^2$, so that we can obtain inequality \eqref{stability_cond}.
Hence, we conclude that, under the condition \eqref{stability_cond}, $\dot{V}(x,r)<0$, $\forall (x,r)\in \mathcal{X}\times\mathcal{R}$, $(x,r)\neq (x^\mathrm{eq},r^\mathrm{eq})$, which completes the proof.
\end{proof}

In the following discussion, we specify a subset of the free-flow region $\mathcal{X}\times\mathcal{R}$  which is positively invariant.  This subset serves a double purpose: the solutions initialized in this subset will not leave the free-flow region, thereby avoiding congestion, and, moreover, it provides an estimate of the region of attraction of the equilibrium.

\begin{lemma}\label{invariance_lemma}
Assume that \eqref{exist_cond} holds.
Then, the set $\mathcal{X}\times\mathcal{R}'$ where $\mathcal{R}':=\{r\in\mathcal{R}\ | \ r\leq\bar{f}/\lambda\}$ is positively invariant under the dynamics \eqref{vec_dyn_eq}. 
\end{lemma}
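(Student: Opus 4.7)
The plan is to invoke a Nagumo-type sub-tangentiality argument: at every point of the boundary of the candidate invariant set, the vector field of \eqref{vec_dyn_eq} must either be tangent to the boundary or point into the interior. Since the set $\mathcal{X}\times\mathcal{R}'$ is a Cartesian product of box-like regions intersected with the simplex, it suffices to examine each facet individually.

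First, I would handle the simplex constraint $\mathds{1}^\top r = 1$. Summing \eqref{vec_dyn_ratio} over $j\in\mathcal{P}$ and using $\mathds{1}^\top \sigma_u(x) = 1$ yields $\mathds{1}^\top \dot r = 1 - \mathds{1}^\top r$, so the affine set $\{r:\mathds{1}^\top r = 1\}$ is invariant. Next, I would verify the non-negativity facets. On $\{r_j = 0\}$, equation \eqref{vec_dyn_ratio} gives $\dot r_j = \sigma_{u,j}(x) \geq 0$, so $r_j$ cannot become negative. On $\{x_j = 0\}$, equation \eqref{vec_dyn_dens} and Assumption~\ref{outflow_ass} ($f_j(0)=0$) give $\dot x_j = \lambda r_j \geq 0$ for any $r\in\mathcal{R}'$. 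These two observations already secure invariance of the non-negative orthant portion of the boundary.

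The crucial step, and the one I expect to be the core of the argument, is the upper bound $r_j \le \bar f_j/\lambda$. At a point where $r_j = \bar f_j/\lambda$, we need $\dot r_j \le 0$, i.e., $\sigma_{u,j}(x) \le \bar f_j/\lambda$ for all $x\in\mathcal{X}$. This is precisely the inequality established in the proof of Proposition~\ref{prop_existence_suff}: the hypothesis \eqref{exist_cond} implies $\lambda e^{-\eta \ubar u} \leq (\mathds{1}^\top e^{-\eta \bar u})\bar f$, and combined with the componentwise bound $\sigma_{u,j}(x) \leq e^{-\eta \ubar u_j}/\mathds{1}^\top e^{-\eta \bar u}$ valid on $\mathcal{X}$, this yields $\lambda\sigma_u(x)\leq \bar f$ uniformly in $x\in\mathcal{X}$. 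Hence $\dot r_j \le -\bar f_j/\lambda + \bar f_j/\lambda = 0$ on this facet.

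Finally, invariance of $\mathcal{X}$ in the density variable follows from the previous step: on $\{x_j = \bar x_j\}$ with $r\in\mathcal{R}'$, \eqref{vec_dyn_dens} gives $\dot x_j = -\bar f_j + \lambda r_j \le 0$ directly from $r_j\le \bar f_j/\lambda$. Collecting all facets and invoking Nagumo's theorem (applied to the locally Lipschitz vector field of \eqref{vec_dyn_eq}, whose regularity is ensured by Assumption~\ref{outflow_ass} and the continuity of $u$) concludes that $\mathcal{X}\times\mathcal{R}'$ is positively invariant. The main subtlety is simply recognizing that the facet $r_j = \bar f_j/\lambda$ is the binding one and that the sufficient condition \eqref{exist_cond} already delivers the required uniform upper bound on $\sigma_u(x)$; the remaining facets are handled by elementary sign arguments.
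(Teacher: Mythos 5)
Your proof is correct and follows essentially the same route as the paper: a facet-by-facet check that the vector field points inward on each boundary piece of $\mathcal{X}\times\mathcal{R}'$, with the binding facet $r_j=\bar f_j/\lambda$ handled exactly as in the paper by showing $\lambda\sigma_{u,j}(x)\leq \bar f_j$ from \eqref{exist_cond} and the bounds $u_j\in[\ubar u_j,\bar u_j]$. Your explicit treatment of the simplex constraint $\mathds{1}^\top r=1$ and the appeal to Nagumo's theorem are welcome additions, but they do not change the substance of the argument.
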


\begin{proof}
The compact set $\mathcal{X}\times\mathcal{R}'$ is positively invariant if the following conditions hold for each $j\in \mathcal{P}$:
\begin{enumerate}[i)]
    \item $\dot{r}_j \geq 0$ when $r_j=0$,\; $x_j\in \mathcal{X}_j$,
    \item $\dot{r}_j \leq 0$ when $r_j=\bar{f}_j/\lambda$,\; $x_j\in \mathcal{X}_j$,
    \item $\dot{x}_j \geq 0$ when  $x_j = 0$, $r\in\mathcal{R}'$,\;
    \item $\dot{x}_j \leq 0$ when  $x_j = \bar{x}_j$, $r\in\mathcal{R}'$.
\end{enumerate}
Conditions (i) and (iii) hold trivially by direct substitutions. By the dynamics \eqref{vec_dyn_ratio}, condition (ii) holds if
\begin{equation*}
    -\frac{\bar{f}_j}{\lambda} + \frac{e^{-\eta u_j(x_j)}}{\sum_{i\in\mathcal{P}}{e^{-\eta u_i(x_i)}} }\leq 0.
\end{equation*}
Bearing in mind that $\ubar{u}_j\leq u_j(x_j) \leq \bar{u}_j$, $\forall x_j\in \mathcal{X}_j$, 
the above inequality follows from \eqref{exist_cond}. Finally, (iv) holds if $\bar{f}_j\geq \lambda r_j$, which is satisfied since $r\in\mathcal{R}'$.
\end{proof}

\begin{corollary}
    Let Assumption \ref{f_strong_mono}, Assumption \ref{lipschitz_u}, inequality \eqref{exist_cond}, and inequality \eqref{stability_cond} hold. Then, any solution $(x,r)$ initialized in $\mathcal{X}\times\mathcal{R}'$, with $\mathcal{R}'$ defined as in Lemma \ref{invariance_lemma}, asymptotically converges to the unique equilibrium $(x^\mathrm{eq},r^\mathrm{eq})\in\mathcal{X}\times\mathcal{R}'$.
\end{corollary}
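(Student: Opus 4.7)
The plan is to assemble the three preceding results into a global convergence statement on the compact invariant subset $\mathcal{X}\times\mathcal{R}'$. The ingredients are: (i) existence of an equilibrium from Proposition~\ref{prop_existence_suff} under \eqref{exist_cond}; (ii) uniqueness and asymptotic stability from Proposition~\ref{asymp_stability} under \eqref{stability_cond} (which coincides with \eqref{uniqueness_eq}); and (iii) positive invariance of $\mathcal{X}\times\mathcal{R}'$ from Lemma~\ref{invariance_lemma}. The only nontrivial splicing work is verifying that the equilibrium itself actually lies in $\mathcal{X}\times\mathcal{R}'$, and then piping the strict Lyapunov decrease of Proposition~\ref{asymp_stability} through the invariant set to conclude convergence.

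First, I would show $(x^{\mathrm{eq}}, r^{\mathrm{eq}})\in\mathcal{X}\times\mathcal{R}'$. Existence in $\mathcal{X}\times\mathcal{R}$ is given by Proposition~\ref{prop_existence_suff} and uniqueness by Proposition~\ref{uniqueness}. From the equilibrium condition in \eqref{eq_eq}, $\lambda r^{\mathrm{eq}} = f(x^{\mathrm{eq}})$; since $x^{\mathrm{eq}}\in\mathcal{X}$ and $f_j$ is monotone with $f_j(\bar{x}_j)=\bar{f}_j$ (Assumption~\ref{f_strong_mono}), this gives $f(x^{\mathrm{eq}})\leq \bar{f}$ componentwise, so $r^{\mathrm{eq}}\leq \bar f/\lambda$, i.e.\ $r^{\mathrm{eq}}\in\mathcal{R}'$.

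Next, by Lemma~\ref{invariance_lemma}, every trajectory initialized in the compact set $\mathcal{X}\times\mathcal{R}'$ remains there for all $t\geq 0$; in particular, such a trajectory is bounded and, throughout its evolution, stays inside the domain $\mathcal{X}\times\mathcal{R}$ on which the Lyapunov analysis of Proposition~\ref{asymp_stability} operates. With $\alpha=\mu_m/\lambda^2$, the Lyapunov function $V(x,r)=\frac{\alpha}{2}\|x-x^{\mathrm{eq}}\|^2+\frac{1}{2}\|r-r^{\mathrm{eq}}\|^2$ is positive definite with respect to $(x^{\mathrm{eq}}, r^{\mathrm{eq}})$ and, by Proposition~\ref{asymp_stability}, satisfies $\dot V(x,r)<0$ for all $(x,r)\in\mathcal{X}\times\mathcal{R}\setminus\{(x^{\mathrm{eq}}, r^{\mathrm{eq}})\}$.

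Finally, combining these facts on the compact forward-invariant set $\mathcal{X}\times\mathcal{R}'$, a standard Lyapunov (or LaSalle) argument implies that the $\omega$-limit set of every such trajectory reduces to $\{(x^{\mathrm{eq}}, r^{\mathrm{eq}})\}$, yielding the claimed asymptotic convergence. The main obstacle is bookkeeping rather than analytical: checking that $r^{\mathrm{eq}}\in\mathcal{R}'$ and that the strict Lyapunov decrease of Proposition~\ref{asymp_stability} transfers verbatim to the invariant subset $\mathcal{X}\times\mathcal{R}'\subset \mathcal{X}\times\mathcal{R}$. No additional analytical ingredients beyond the three preceding results are required.
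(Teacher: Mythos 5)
Your proposal is correct and follows essentially the same route as the paper's proof: verifying $r^{\mathrm{eq}}=f(x^{\mathrm{eq}})/\lambda\leq\bar f/\lambda$ so the equilibrium lies in $\mathcal{X}\times\mathcal{R}'$, invoking existence (Proposition~\ref{prop_existence_suff}), uniqueness and the strict Lyapunov decrease (Proposition~\ref{asymp_stability}), positive invariance (Lemma~\ref{invariance_lemma}), and concluding via LaSalle on the compact invariant set. No gaps.
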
 
\begin{proof}
     Assumption \ref{f_strong_mono} and inequality \eqref{exist_cond} ensure existence of an equilibrium. From \eqref{eq_eq}, we have
    \[
    r^\mathrm{eq} = \frac{f(x^\mathrm{eq})}{\lambda} \leq \frac{\bar{f}}{\lambda},
    \]
    demonstrating that the free-flow equilibrium of \eqref{vec_dyn_eq} belongs to the set $\mathcal{X}\times\mathcal{R}'$. Moreover, Assumptions \ref{f_strong_mono} and \ref{lipschitz_u} together with  inequality \eqref{stability_cond}, guarantee the uniqueness of the equilibrium $(x^\mathrm{eq},r^\mathrm{eq})\in\mathcal{X}\times\mathcal{R}'$. 
    Note that the set $\mathcal{X}\times\mathcal{R}'$ is positively invariant by Lemma \ref{invariance_lemma}.
    By the proof of Proposition \ref{asymp_stability}, we have $\dot{V}\leq 0$, $\forall (x,r)\in\mathcal{X}\times\mathcal{R}$, and $\dot V=0$ if and only if $(x,r)=(x^\mathrm{eq},r^\mathrm{eq})$.
    Hence, any solution of \eqref{vec_dyn_eq} that is initialized in $\mathcal{X}\times\mathcal{R}'$, remains in $\mathcal{X}\times\mathcal{R}'$ and asymptotically converges to  $(x^\mathrm{eq},r^\mathrm{eq})\in\mathcal{X}\times\mathcal{R}'$, where the latter follows from LaSalle's invariance principle. 
\end{proof}

\medskip{}
Combining the results of this section, 
we can now characterize the class of traffic information $\mathcal{U}$ specified in Problem \ref{problem}. 
In particular, let Assumption~\ref{f_strong_mono} hold on the outflow functions. Then, 
$\mathcal{U}$ is given by
\begin{equation*}
\begin{aligned}
    \mathcal{U}=\Bigg\{ u:\mathcal{X}\rightarrow\mathbb{R}_{\geq 0}^p  \bigg|  \eqref{exist_cond}, 
\bigg\lvert\frac{\partial u_j}{\partial x_j}\bigg\rvert < \frac{2\mu_m}{\eta\lambda},  \forall x_j\in\mathcal{X}_j,   \forall j\in\mathcal{P}
\Bigg\}.
\end{aligned}
\end{equation*}
For any $u\in\mathcal{U}$, the closed-loop system \eqref{vec_dyn_eq} admits a unique free-flow equilibrium, which is asymptotically stable. 
\section{Illustrative Example}\label{example}
Building on the theoretical results established in Section \ref{condition_sec}, here we will numerically demonstrate how a carefully designed information signal stabilizes the traffic dynamics in the free-flow regime, whereas routing based on real travel-time information may lead to congestion.
\subsection{Information design}
Consider the traffic model given in \eqref{vec_dyn_eq}, with outflow functions $f_j(x_j)=\min{\{\mu_jx_j,\ \bar{f}_j\}}$ and travel times given by the BPR function \cite{sheffi1985urban} 

\begin{equation}\label{bpr}
    \tau_j(x_j) = t_j^0\left(1+\theta\left(\frac{x_j}{\bar{x}_j}\right)^\delta\right),
\end{equation}
where $t_j^0>0$ is the free-flow travel time on path $j$. In \eqref{bpr}, $\theta>0$ and $\delta>0$ are empirical parameters, commonly chosen as $\theta=0.15$ and $\delta=4$.

To design the traffic information, we consider a class of affine functions
\begin{equation*}
    u(x) = A x + b, \qquad A := \mathrm{diag}(a_j), \ b := \mathrm{col}(b_j).
\end{equation*}
The parameters $A$, $b$, together with a target equilibrium $x^*$, are chosen following the optimization problem:
\begin{equation}\label{select_u}
    \begin{aligned}
        \min_{x^*\in\mathcal{X}, A, b}\; & \sum_{j\in \mathcal{P}} 
        \phi_j(x_j^*, a_j, b_j)\\
        \text{s.t.} \quad & Ax+b\in\mathcal{U}, \ f(x^*)=\lambda\frac{e^{-\eta u(x^*)}}{\mathds{1}^\top e^{-\eta u(x^*)}}
    \end{aligned}
\end{equation}
where
\begin{equation}\label{cost_func}
\begin{aligned}
    \phi_j(x_j^*, a_j, b_j)&:=       
        f_j(x_j^*)\tau_j(x_j^*) 
        \\&+ \gamma \int_0^{\bar{x}_j}{\big(a_j y + b_j - \tau_j(y)\big)^2} dy.
\end{aligned}
\end{equation}
The parameter $\gamma \geq 0$ in \eqref{cost_func} balances two objectives: the networks efficiency formulated as the  aggregate travel time (first term), and the credibility of the traffic information provided to the users, which is formulated as the squared distance of the information from the actual travel time in the free-flow region (second term).
The constraint $u \in \mathcal{U}$ ensures that the traffic information $u(\cdot)$ results in a unique asymptotically stable free-flow equilibrium $(x^\mathrm{eq},r^\mathrm{eq})$, while the additional equality constraint ensures that $x^\mathrm{eq}=x^*$.
\subsection{Numerical results}
We present a numerical analysis of the proposed model under the following setting. The inflow is fixed at $\lambda = 1$, and the network consists of $p = 5$ parallel paths with critical density $\bar{x} = (0.15, 0.15, 0.175, 0.2, 0.2)$ and $\mu = (2, 2, 3, 2.5, 4)$. The parameters in \eqref{bpr} are chosen as $t^0 = (8, 6, 5, 5, 2)$, $\theta = 1.5$, and $\delta = 2$.

When users are provided with actual travel-time information, i.e., when $u(x) = \tau(x)$ in system \eqref{vec_dyn_eq}, the dynamics may converge to an equilibrium where one or more paths operate in the congested regime. This effect becomes more pronounced as the responsiveness rate $\eta$ increases, since drivers react more strongly to small differences in the perceived travel times, leading to concentrated routing choices and potential overload of certain paths. As illustrated in the top panel of Figure \ref{tt_eq_fig}, for $\eta \geq 7.94$, the equilibrium density on link 5 exceeds its critical threshold ($x_5^{\mathrm{eq}} > \bar{x}_5$), meaning that link 5 transitions from free-flow to the congested regime. Conversely, the bottom panel of Fig. \ref{tt_eq_fig} shows the equilibrium link densities when users are provided with the affine information signal $u(x)$ obtained by \eqref{select_u} with $\gamma=0$. This designed information steers the system toward the socially optimal free-flow equilibrium, where all equilibrium densities remain in the free-flow regime.
\begin{figure}[t]
    \centering
    \includegraphics[width=.85\linewidth]{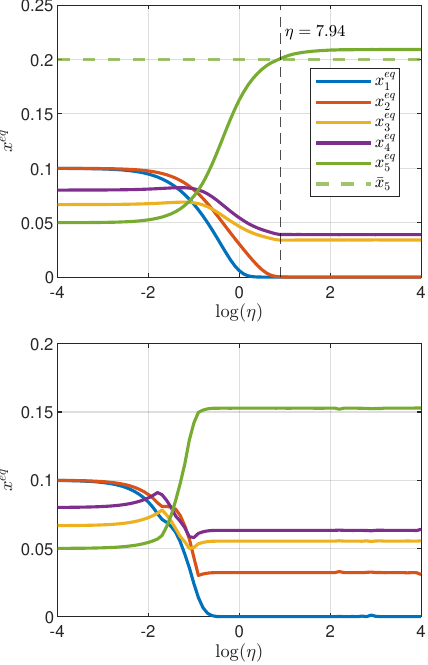}
    \caption{Equilibrium link densities with respect to $\eta$ when users are provided with real travel-time information $\tau(x)$ (top), and designed information $u(x)$ with $\gamma=0$ (bottom).}
    \label{tt_eq_fig}
\end{figure}

While fixing 
$\gamma=0$ maximizes the efficiency of system \eqref{vec_dyn_eq} at equilibrium, the resulting information signal may be inconsistent with the travel times experienced by users. Such disparity raises credibility concerns, and drivers may be reluctant to make routing decisions based on the provided information.  
This is accommodated in the optimization \eqref{select_u}, where $\gamma$ penalizes the aforementioned disparity. The effect of $\gamma$ on the system’s efficiency, measured by aggregate travel time at the equilibrium, and on the credibility of the information, quantified by the error $\|\tau(x(t)) - u(x(t))\|$, are investigated in Figures \ref{total_travel_time_vs_eta} and \ref{tau_u_error}, respectively. As shown in the figures, increasing $\gamma=0$ to $\gamma=0.1$ improves the credibility of the provided information substantially while practically maintaining the optimal efficiency of the system.
Increasing $\gamma$ further improves information credibility but reduces overall travel-time efficiency.
\begin{figure}[t]
    \centering
    \includegraphics[width=.8\linewidth]{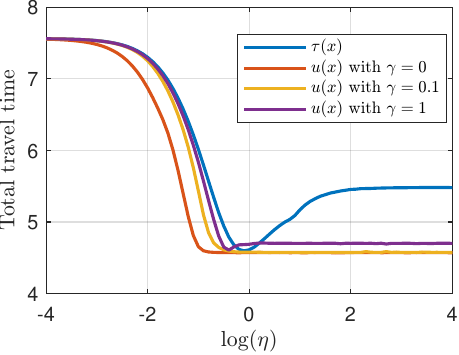}
    \caption{Comparison of total travel times experienced by the drivers at equilibrium, when provided with different travel-time information.}
    \label{total_travel_time_vs_eta}
\end{figure}
\begin{figure}[t]
    \centering
    \includegraphics[width=.8\linewidth]{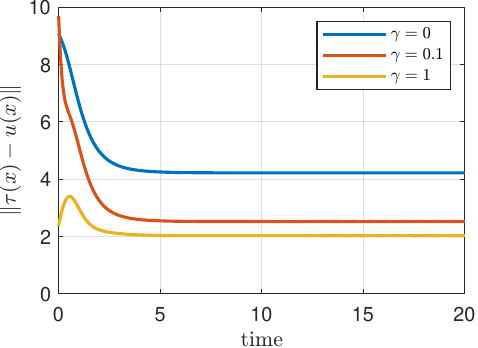}
    \caption{Credibility of the designed information $u(x)$ with different values of $\gamma$, quantified by $\|\tau(x)-u(x)\|$.}
    \label{tau_u_error}
\end{figure}
\par
Solving \eqref{select_u} for $\gamma=0.1$ and $\eta=20$, 
the parameters of $u(x)$ are determined as $a=(0.2,-0.19,0.2,0.2,0)$ and $b=(6.84,6.13,6.05,6.06,6)$, and  the corresponding equilibrium point is $x^*=(0, 0.026, 0.056, 0.063, 0.156)$ and $r^*=(0,	0.052,	0.167,	0.158,	0.623)$. The trajectory of the densities and routing ratios, starting from an arbitrary point $(x^0,r^0)\in\mathcal{X}\times\mathcal{R}'$, is shown in Figure \ref{trajectory_fig}. 
Furthermore, the positive invariance of the set $\mathcal{X}\times\mathcal{R}'$ is illustrated through Figure \ref{invariance_fig}, where $100$ trajectories of $x(t)/\bar{x}$ and $r(t)/\bar{f}$, initialized arbitrarily in $\mathcal{X}\times \mathcal{R}'$, are presented. We can see that $x(t)/\bar{x}\in [0,1]$ and $r(t)/\bar{f}\in [0,1/\lambda]$ for all $t\geq 0$, demonstrating that solutions initialized in the set $\mathcal{X}\times \mathcal{R}'$ remain within it.

    \begin{figure}[t]
    \centering
    \includegraphics[width=\linewidth]{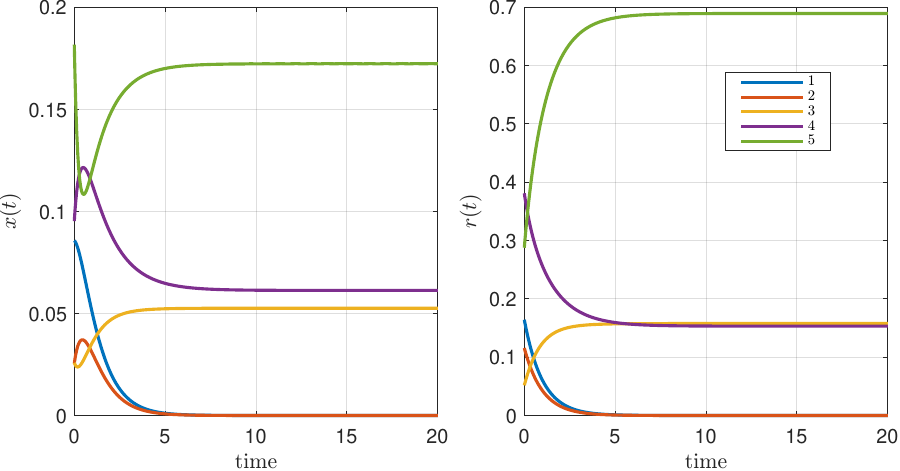}
    \caption{Trajectory of link densities and routing ratios under the designed $u(x)$ with $\eta=20$ and $\gamma=0.1$.}
    \label{trajectory_fig}
\end{figure}

\begin{figure}[t]
    \centering
    \includegraphics[width=\linewidth]{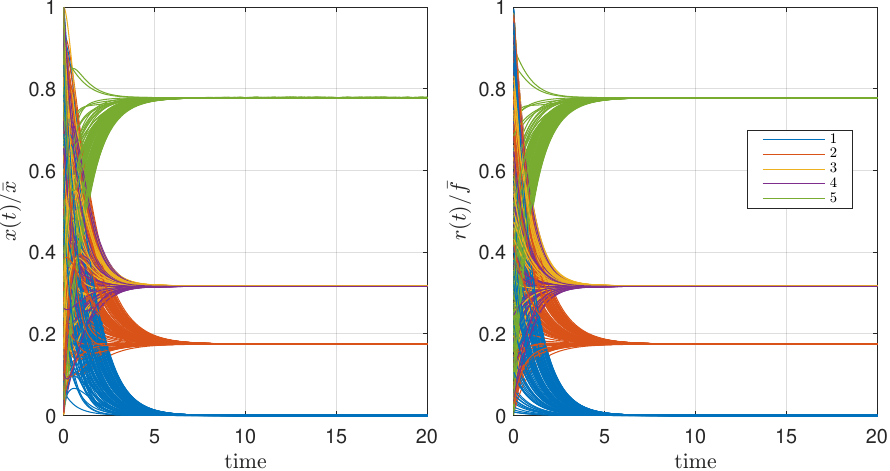}
    \caption{$100$ trajectories of $x(t)/\bar{x}$ and $r(t)/\bar{f}$ starting from $x^0\in\mathcal{X}\times \mathcal{R}'$ under the designed $u(x)$ with $\eta=20$ and $\gamma=0.1$.}
    \label{invariance_fig}
\end{figure}
\section{Conclusion}\label{conclusion_sec}
This paper studied a parallel-path traffic network with joint density–routing dynamics evolving at the same timescale, where routing behavior follows logit dynamics influenced by an announced traffic information. We derived sufficient conditions on the traffic information signals such that there exists a unique free-flow equilibrium that is asymptotically stable. Moreover, we identified a positively invariant set such that all trajectories starting in the free-flow regime remain within that regime, avoiding congestion at all times. In addition to the analytical results, the design of traffic information, satisfying the desired conditions, and its effectiveness was investigated through a numerical example where we demonstrated the efficiency and credibility of the designed traffic information.

A key direction for future research is to extend the framework to general network topologies. In addition, the control of such networks through information design should be explored in both the free-flow and congested regimes.

\bibliographystyle{IEEEtran}
\bibliography{DraftRef}

\end{document}